\newtheorem{definition}{Definition}
\newtheorem{theorem}{Theorem}
\newtheorem{lemma}{Lemma}
\newtheorem{example}{Example}
\begin{document}

\title{Constructive Proof of Global Lyapunov Function as Potential Function}

\author{Ruoshi Yuan}
\author{Yian Ma}
\author{Bo Yuan}
\email{Corresponding author. Email: boyuan@sjtu.edu.cn}
\affiliation{Department of Computer Science and Engineering}
\author{Ping Ao}
\email{Corresponding author. Email: aoping@sjtu.edu.cn}
\affiliation{Shanghai Center for Systems Biomedicine and Department of Physics\\Shanghai Jiao Tong University, Shanghai, 200240, China}
\date{\today}

\begin{abstract}
We provide a constructive proof on the equivalence of two fundamental concepts: the global Lyapunov function in engineering and the potential function in physics, establishing a bridge between these distinct fields.
 This result suggests new approaches on the significant unsolved problem namely to construct Lyapunov functions for general nonlinear systems through the analogy with existing methods on potential functions.
 In addition, we show another connection that the Lyapunov equation is a reduced form of the generalized Einstein relation for linear systems.
\end{abstract}
\pacs{02.30.Yy, 05.70.Ln, 05.45.-a, 87.10.Ed}
\maketitle

The classical Lyapunov function \cite{lyapunov1992general,hirsch1974differential,sastry1999nonlinear,haddad2008nonlinear} has been widely applied in engineering for stability analysis though restricted for a specified fixed point.
Constructing such functions for general nonlinear systems is of great theoretical and practical interests \cite{johansson1998computation,papachristodoulou2002construction}, but still a challenge that would require one's ``divine inspiration" in application \cite{strogatz2000nonlinear}.
We introduce in this letter a global Lyapunov function as the natural generalization of its classical local version. Inside a neighborhood of a stable fixed point, the global Lyapunov function usually reduces to a classical one. Moreover, it enables us to do quantitative analysis of complex dynamical behaviors far from equilibrium (e.g.~multi-stable states and periodic attractors) which are ubiquitous in real systems \cite{scheffer2001catastrophic,field1974oscillations} but beyond the scope of the classical one.
Research on a fundamental concept in physics, the potential function, has been motivated recently by uncovering global principles of complex dynamics in biology \cite{frauenfelder1991energy,wang2008potential,ao2009global,qian2010}, physics \cite{ge2009thermodynamic,wang2003energy} and control theory \cite{maschke2002energy,wang2003generalized}.
One of the present authors has proposed a general construction of potential functions for stochastic dynamics \cite{ao2004potential,ao2008emerging}.
It was first formulated during the study of the robustness of a stochastic switch \cite{zhu2004calculating}.
Explicit results for fixed point \cite{kwon2005structure} and limit cycle systems \cite{zhu2006limit} have also been derived.
We demonstrate here that the global Lyapunov function is actually the potential function, which connects engineering to physics and indicates systematic approaches for constructing Lyapunov functions.

To avoid unnecessary mathematical complication, here we will only consider smooth dynamics. The present results can be directly extended to more general systems.
The definition of the classical Lyapunov function for a smooth system
\begin{align}
\dot{\mathbf{q}}=\mathbf{f}(\mathbf{q})\,,
\end{align}
where $\mathbf{f}:\mathbb{R}^n\xrightarrow{}\mathbb{R}^n$ is given by \cite{hirsch1974differential}
\begin{definition}[Lyapunov Function]
\label{def:Lyapunov}
Let $\mathbf{q}^*$ be a fixed point for the system and $L: \mathcal{O}\to\mathbb{R}$ a $C^{1}$ function defined on an open set $\mathcal{O}$ containing $\mathbf{q}^*$. Then $L$ satisfying the following conditions is called a Lyapunov function.
\begin{enumerate}
\item[($a$)] $L(\mathbf{q}^*) = 0$ and $L(\mathbf{q}) > 0$ if $\mathbf{q}\neq \mathbf{q}^*$ \label{item_l:a};
\item[($b$)] вл$\dot{L}(\mathbf{q})=\frac{dL}{dt}|_\mathbf{q}\leqslant0$ for all $\mathbf{q}\in\mathcal{O}$\label{item_l:b}.
\end{enumerate}
\end{definition}
($a$) implies that the fixed point $\mathbf{q}^*$ is a local extremum such that $\nabla L(\mathbf{q}^*)=0$. Although ($a$) is sufficient for the determination of stability, it seems too strict for a quantity with global meaning as well as applicable for complex dynamical behaviors.
We have to amend this condition in order to contain at least saddle points.
For such a purpose, we take the weaker form of ($a$): $\nabla L(\mathbf{q}^*)=0$ for all fixed points $\mathbf{q}^*$, to fit our generalization.
This together with ($b$) constitutes the definition of \textit{global Lyapunov function}.
\begin{definition}[Global Lyapunov Function]
\label{def:cand_Lyapunov}

  Let $\psi:\mathbb{R}^n\xrightarrow{}\mathbb{R}$ be a $C^1$ function. Then $\psi$ satisfying the following conditions is called a global Lyapunov function.
  \begin{enumerate}[(a)]

  \item $\nabla \psi(\mathbf{q^*})=0$ for all $\mathbf{q}^*$ where $\dot{\mathbf{q}}=\mathbf{f}(\mathbf{q}^*)=0$ \label{item_g:a};
  \item $\dot{\psi}(\mathbf{q})=\frac{d\psi}{dt}|_\mathbf{q}\leqslant0$ for all $\mathbf{q}\in \mathbb{R}^n$ \label{item_g:b}.
  \end{enumerate}
\end{definition}
The global Lyapunov function is equivalent to the potential function obtained from the following proposed Canonical Form.

The evolution of a deterministic dynamical system described by a set of differential equations
can be considered as a \textit{massless} particle moving along the trajectories inside the phase space. From a physical point of view, it is natural to explain the motion of this particle as a consequence of the underlying driving forces where
$\mathbf{F}_{driving}=m\ddot{\mathbf{q}}=0$.
These forces can be separated generally in physics into a dissipative and a conservative part
$\mathbf{F}_{driving}=\mathbf{F}_{conservative}+\mathbf{F}_{dissipative}=0$.
Without the loss of generality, we use a frictional force to represent the dissipative part
$\mathbf{F}_{dissipative}=-S\dot{\mathbf{q}}$ and a Lorentz force together with an energy induced force as the conservative part $\mathbf{F}_{conservative}=e\dot{\mathbf{q}}\times \mathbf{B}+\left[-\nabla \psi(\mathbf{q})\right]$, thus
$-S \dot{\mathbf{q}}+e\dot{\mathbf{q}}\times \mathbf{B}-\nabla \psi(\mathbf{q})=0$
where $S$ is symmetric and semi-positive definite.

The semi-positive definite requirement for $S$ guarantees the resistance of the frictional force whose valid values are restricted to the negative half space. The potential function $\psi$ here serves as an indicator demonstrating the influence of the other two forces onto the energy of the system. It is apparent that the energy induced force is equal and opposite everywhere to the resultant of the other forces as such
\begin{align}
\label{eq:intro_origin}
S\dot{\mathbf{q}}+e \mathbf{B}\times\dot{\mathbf{q}}=-\nabla \psi (\mathbf{q})\,.
\end{align}
The work done by the frictional force is then the reduced amount of $\psi$.

However, a problem occurs for systems whose dimension is higher than 3, since the cross product $\mathbf{B}\times\dot{\mathbf{q}}$ is undefined. In order to generalize \eqref{eq:intro_origin} to be valid for arbitrary n-dimensional systems, we introduce a generalized form of this vector-valued cross product $\mathbf{B}\times\dot{\mathbf{q}}$ as $T\dot{\mathbf{q}}$, where $T$ is an antisymmetric matrix. This definition is consistent with the three dimensional case, since $\mathbf{B}\times\dot{\mathbf{q}}=T\dot{\mathbf{q}}$ when $T_{ij}=-\varepsilon_{ijk}B_k$ and $\varepsilon_{ijk}$ being the Levi-Civita symbol.

Hence by setting $e=1$, we reach
\begin{align}
\label{eq:CFE}
\left[S+T\right] \dot{\mathbf{q}}=-\nabla \psi (\mathbf{q})\,,
\end{align}
where $S$ is symmetric and semi-positive definite and $T$ is antisymmetric. \eqref{eq:CFE} is
referred to as the \textit{Canonical Form} which induces $n(n-1)/2$ equations (utilizing the matrix-valued cross product in Definition \ref{def:cross_product}):
\begin{align}
\label{eq:D_CFE}
\nabla\times\left[\left(S+T\right)\dot{\mathbf{q}}\right]=0\,.
\end{align}

Symmetrically, it is proper to require $\psi$ to be convertible back to its original system. This gives the \textit{Standard Form}:
\begin{align}
\label{eq:SFE}
\dot{\mathbf{q}}=-\left[D+Q\right]\nabla \psi (\mathbf{q})\,,
\end{align}
where $D$ is symmetric and semi-positive definite, $Q$ is antisymmetric. The connection between $S$, $T$ and $D$ is characterized by the generalized Einstein relation (GER) \cite{ao2004potential,ao2008emerging}:
\begin{align}
\label{GER}
[S+T]D[S-T]=S\,,
\end{align}
This gives the other $n(n+1)/2$ equations. For a chosen $D$ with proper boundary conditions, we obtain $[S+T]$ ($n^2$ unknowns) by solving these $n^2$ equations provided by \eqref{eq:D_CFE} and \eqref{GER}, then $\psi$ can be derived from \eqref{eq:CFE}.

 Equation \eqref{GER} demonstrates the general relationship between friction and diffusion for stochastic dynamics \cite{ao2008emerging}. $D$ is the diffusion matrix indicating the random driving force. Deterministic system can be considered generally as a stochastic system with the noise being zero in strength.
 For deterministic dissipative system, there exists a frictional force that has the common origin with the \textit{undefined} (by its differential equations) random driving force.
 This point is implied by the fluctuation-dissipation theorem \cite{kubo1966fluctuation} or more generally the equation \eqref{GER}.
\begin{theorem}
\label{thm:equivalence}
For a certain dynamical system, any potential function $\psi$ obtained from \eqref{eq:CFE} is a global Lyapunov function. Conversely, explicit construction of $S$ and $T$ can be given for any global Lyapunov function of the system.
\end{theorem}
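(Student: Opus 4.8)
The plan is to establish the two directions separately, the forward implication by direct verification and the converse by an explicit pointwise construction. For the forward direction, I would take a $\psi$ satisfying the Canonical Form \eqref{eq:CFE}, $[S+T]\dot{\mathbf{q}}=-\nabla\psi$, and check the two conditions of Definition~\ref{def:cand_Lyapunov}. Condition~($a$) is immediate: at any fixed point $\mathbf{q}^*$ one has $\dot{\mathbf{q}}=\mathbf{f}(\mathbf{q}^*)=0$, so the left-hand side of \eqref{eq:CFE} vanishes and therefore $\nabla\psi(\mathbf{q}^*)=0$. For condition~($b$) I would differentiate $\psi$ along trajectories, $\dot{\psi}=(\nabla\psi)^{T}\dot{\mathbf{q}}=-\dot{\mathbf{q}}^{T}[S+T]^{T}\dot{\mathbf{q}}$, and use $[S+T]^{T}=S-T$ together with the vanishing of the antisymmetric quadratic form $\dot{\mathbf{q}}^{T}T\dot{\mathbf{q}}=0$ to obtain $\dot{\psi}=-\dot{\mathbf{q}}^{T}S\dot{\mathbf{q}}\leqslant0$, the last inequality by the semi-positive definiteness of $S$. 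This shows every such $\psi$ is a global Lyapunov function.

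For the converse, I would reverse-engineer $S$ and $T$ from a given global Lyapunov function $\psi$. Writing $\mathbf{g}=-\nabla\psi$, the crucial input is that condition~($b$) reads $\dot{\psi}=-\mathbf{f}\cdot\mathbf{g}\leqslant0$, that is $\mathbf{f}\cdot\mathbf{g}\geqslant0$ everywhere. Away from fixed points ($\mathbf{f}\neq0$) I would split $\mathbf{g}$ into its component along $\mathbf{f}$ and its orthogonal complement, and assign them to the symmetric and antisymmetric parts respectively. A convenient explicit choice is
\begin{align}
S=\frac{\mathbf{f}\cdot\mathbf{g}}{|\mathbf{f}|^{2}}\,I\,,\qquad T=\frac{\mathbf{g}\mathbf{f}^{T}-\mathbf{f}\mathbf{g}^{T}}{|\mathbf{f}|^{2}}\,,\notag
\end{align}
for which $S$ is symmetric and semi-positive definite precisely because $\mathbf{f}\cdot\mathbf{g}\geqslant0$, $T$ is antisymmetric, and a short computation gives $[S+T]\mathbf{f}=\mathbf{g}=-\nabla\psi$, recovering \eqref{eq:CFE}.

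The step I expect to be the main obstacle is the behavior at the fixed points, where $\mathbf{f}=0$ makes the denominators $|\mathbf{f}|^{2}$ singular. At such a point condition~($a$) forces $\nabla\psi=0$ and hence $\mathbf{g}=0$, so \eqref{eq:CFE} holds trivially and any admissible $S,T$ suffice; the genuine issue is whether the above expressions extend to well-defined matrices there. I would handle this through a local analysis near each equilibrium---using that $\mathbf{f}$ and $\mathbf{g}$ both vanish at $\mathbf{q}^*$ with comparable leading (generically linear) behavior, so that $\mathbf{f}\cdot\mathbf{g}$ and $|\mathbf{f}|^{2}$ are comparable quadratics and the quotients remain bounded---and then verify that assigning any admissible limiting values at the isolated equilibria preserves semi-positive definiteness of $S$ and antisymmetry of $T$ on all of $\mathbb{R}^n$. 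The delicate point to control is the potential direction-dependence of these limits, which decides whether the construction is genuinely continuous or merely well defined off the equilibrium set.
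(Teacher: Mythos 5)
Your proposal is correct and follows essentially the same route as the paper: your $S=\frac{\mathbf{f}\cdot\mathbf{g}}{|\mathbf{f}|^{2}}I$ and $T=\frac{\mathbf{g}\mathbf{f}^{T}-\mathbf{f}\mathbf{g}^{T}}{|\mathbf{f}|^{2}}$ are exactly the paper's expressions \eqref{exp_S}--\eqref{exp_T}, with the outer-product difference $\mathbf{g}\mathbf{f}^{T}-\mathbf{f}\mathbf{g}^{T}$ coinciding entrywise with the paper's matrix-valued cross product $-\nabla\psi\times\mathbf{f}$, and the forward direction is the same quadratic-form computation. The paper likewise dismisses the fixed-point singularity by noting $\nabla\psi(\mathbf{q}^{*})=0$ there, so your additional worry about continuous extension of $S$ and $T$ at equilibria, while a reasonable refinement, is not required for the statement as posed.
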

\begin{definition}[Matrix-valued Cross Product]
\label{def:cross_product}
The matrix-valued cross product of two vectors $\mathbf{x},\mathbf{y}\in\mathbb{R}^n$ is given by
$\mathbf{x} \times \mathbf{y}=A=(a_{ij})_{n\times n}=(x_iy_j-x_jy_i)_{n\times n}$.
The output is no longer a vector but an antisymmetric matrix.
\end{definition}

\begin{lemma}For arbitrary three vectors $\mathbf{x},\mathbf{y},\mathbf{z}\in\mathbb{R}^n$ (Note here the dot product and the matrix-valued cross product is precedent to the matrix multiplication),
\begin{equation}
\mathbf{x}\cdot\mathbf{y}\mathbf{z}=\mathbf{z}\cdot\mathbf{x}\mathbf{y}+\mathbf{z}\times\mathbf{y}\mathbf{x}\,.
\end{equation}
\label{lemma:cross_product}
The proof is straightforward since
$(\mathbf{z}\cdot\mathbf{x}\mathbf{y}+\mathbf{z}\times\mathbf{y}\mathbf{x})_{i}=\sum_kz_kx_ky_i+\sum_k(z_iy_k-z_ky_i)x_k
=(\mathbf{x}\cdot\mathbf{y}\mathbf{z})_i$.

\end{lemma}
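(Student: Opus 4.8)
The plan is to prove the identity by a direct componentwise comparison, which is the most transparent route given that every object appearing in the statement is assembled from the elementary dot product and the matrix-valued cross product of Definition~\ref{def:cross_product}. First I would fix an arbitrary index $i\in\{1,\dots,n\}$ and write out the $i$-th component of each of the three terms separately. The left-hand side is the scalar $\mathbf{x}\cdot\mathbf{y}$ multiplying the vector $\mathbf{z}$, so its $i$-th entry is $(\mathbf{x}\cdot\mathbf{y})z_i=\bigl(\sum_k x_k y_k\bigr)z_i$. Likewise, the first term on the right is the scalar $\mathbf{z}\cdot\mathbf{x}$ times $\mathbf{y}$, contributing $(\mathbf{z}\cdot\mathbf{x})y_i=\bigl(\sum_k z_k x_k\bigr)y_i$.

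The only term that genuinely invokes the new definition is $\mathbf{z}\times\mathbf{y}\,\mathbf{x}$, parsed as $(\mathbf{z}\times\mathbf{y})\mathbf{x}$. Here $\mathbf{z}\times\mathbf{y}$ is the antisymmetric matrix with $(i,k)$ entry $z_i y_k-z_k y_i$, and this matrix then acts on $\mathbf{x}$, so its $i$-th component is $\sum_k (z_i y_k-z_k y_i)x_k = z_i\sum_k y_k x_k - y_i\sum_k z_k x_k$. Adding this to the first right-hand term, the summand $y_i\sum_k z_k x_k$ cancels exactly against $(\mathbf{z}\cdot\mathbf{x})y_i$, leaving $z_i\sum_k y_k x_k=(\mathbf{x}\cdot\mathbf{y})z_i$, which is precisely the $i$-th component of the left-hand side. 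Since $i$ was arbitrary and a vector in $\mathbb{R}^n$ is determined by its $n$ components, the asserted identity follows.

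There is no substantive obstacle here beyond bookkeeping; the computation is an elementary cancellation. The single point demanding care is the operator-precedence convention flagged in the statement, namely that the dot and cross products bind more tightly than the subsequent matrix multiplication, so that $\mathbf{z}\times\mathbf{y}\,\mathbf{x}$ must be read as $(\mathbf{z}\times\mathbf{y})\mathbf{x}$ rather than $\mathbf{z}\times(\mathbf{y}\,\mathbf{x})$. Once that parsing is fixed and the antisymmetry $z_iy_k-z_ky_i$ of the cross-product entries is substituted, the cancellation described above is forced, and the identity is immediate.
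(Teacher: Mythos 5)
Your proof is correct and coincides with the paper's own argument: both verify the identity componentwise, expanding $(\mathbf{z}\times\mathbf{y}\,\mathbf{x})_i=\sum_k(z_iy_k-z_ky_i)x_k$ and cancelling the $y_i\sum_k z_kx_k$ term against $(\mathbf{z}\cdot\mathbf{x})y_i$ to recover $(\mathbf{x}\cdot\mathbf{y})z_i$. You merely spell out the precedence convention and the cancellation more explicitly than the paper's one-line computation, but the route is identical.
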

\begin{proof}[Proof of Theorem \ref{thm:equivalence}]
From \eqref{eq:CFE}, $\dot{\mathbf{q}}=\mathbf{f}(\mathbf{q}^*)=0 \Rightarrow \nabla \psi (\mathbf{q}^*)=0$. Note that
$ \frac{d}{dt}{\psi}(\mathbf{q})=\dot{\mathbf{q}}^T\nabla \psi (\mathbf{q})
=-\dot{\mathbf{q}}^T \left[ S(\mathbf{q})+T(\mathbf{q}) \right] \dot{\mathbf{q}}
 =-\dot{\mathbf{q}}^T S(\mathbf{q}) \dot{\mathbf{q}}
\leqslant 0$, we find that $\psi$ satisfies $\dot{\psi}\leqslant0$ for all $\mathbf{q}\in \mathbb{R}^n$. Hence $\psi$ is a global Lyapunov function according to Definition \ref{def:cand_Lyapunov}.

Conversely, for any global Lyapunov function $\psi(\mathbf{q})$ of a given system $\dot{\mathbf{q}}=\mathbf{f}(\mathbf{q})$, by setting
\begin{align}
\label{exp_S}
S&=-\frac{\nabla \psi\cdot\mathbf{f}}{\mathbf{f}\cdot\mathbf{f}}E\,,\\
\label{exp_T}
T&=-\frac{\nabla \psi\times \mathbf{f}}{\mathbf{f}\cdot\mathbf{f}}\,,
\end{align}
  by utilizing $\nabla\psi\cdot\mathbf{f}=\nabla\psi\cdot\dot{\mathbf{q}}=\dot{\psi}\leqslant0$, $S$ is symmetric and semi-positive definite and $T$ is antisymmetric by the definition of the matrix-valued cross product.
 From Lemma \ref{lemma:cross_product}, we can obtain
  \begin{align}
   \mathbf{f}\cdot\mathbf{f}\nabla \psi=\nabla \psi\cdot\mathbf{f}\mathbf{f}+\nabla \psi\times \mathbf{f}\mathbf{f}\,,
  \end{align}
  by letting $\mathbf{x}=\mathbf{y}=\mathbf{f}$ and $\mathbf{z}=\nabla \psi$. Then
\begin{align}
\label{eq:guarantee}
\left[S+T\right]\dot{\mathbf{q}}=-\frac{\left(\nabla \psi\cdot\mathbf{f}E+\nabla \psi\times \mathbf{f}\right)\mathbf{f}}{\mathbf{f}\cdot\mathbf{f}}=-\nabla \psi\,.
\end{align}
When $\dot{\mathbf{q}}=0$, since $\nabla\psi=0$, the singularity of $S$ and $T$ will not affect $\psi$.
This demonstrates that any global Lyapunov function for a given system will satisfy \eqref{eq:CFE}.
\end{proof}

Explicit construction for the chosen $D$ and $Q$ fulfilling \eqref{eq:SFE} and \eqref{GER} can be provided under the former configuration of $S$ and $T$ utilizing the matrix-valued cross product,
\begin{align}
\label{eq:d}
D&=-\left[\frac{\mathbf{f}\cdot\mathbf{f}}{\nabla \psi \cdot \mathbf{f}}
E+\frac{\left(\nabla \psi\times \mathbf{f}\right)^2}{\left(\nabla \psi \cdot \mathbf{f}\right)\left(\nabla \psi\cdot\nabla \psi\right)}\right]\,,\\
\label{eq:q}
Q&=\frac{\nabla \psi\times \mathbf{f}}{\nabla \psi\cdot\nabla \psi}\,,
\end{align}
whose proof is straightforward by the Lemma \ref{lemma_cross3}.
\begin{lemma}[]
\label{lemma_cross3}
$\forall \mathbf{x}, \mathbf{y}\in \mathbb{R}^n:$
\begin{align}
\left(\mathbf{x}\times \mathbf{y}\right)^3=\left[\left( \mathbf{x}\cdot\mathbf{y}\right)^2-\left(\mathbf{x}\cdot\mathbf{x}\right)\left(\mathbf{y}\cdot\mathbf{y}\right)\right] \left(\mathbf{x}\times\mathbf{y}\right)\,.
\end{align}
\end{lemma}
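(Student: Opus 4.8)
The plan is to exploit the outer-product representation of the matrix-valued cross product of Definition \ref{def:cross_product}. Writing $A:=\mathbf{x}\times\mathbf{y}$, the entrywise formula $(A)_{ij}=x_iy_j-x_jy_i$ is precisely $A=\mathbf{x}\mathbf{y}^{T}-\mathbf{y}\mathbf{x}^{T}$. Thus the range of $A$ lies in $\mathrm{span}\{\mathbf{x},\mathbf{y}\}$ and $A$ has rank at most two. This already forces $A^{3}$ to be a linear combination of the same rank-one blocks, so the only genuine task is to identify the scalar coefficient and confirm it equals $(\mathbf{x}\cdot\mathbf{y})^{2}-(\mathbf{x}\cdot\mathbf{x})(\mathbf{y}\cdot\mathbf{y})$.

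First I would abbreviate $a=\mathbf{x}\cdot\mathbf{x}$, $b=\mathbf{y}\cdot\mathbf{y}$, $c=\mathbf{x}\cdot\mathbf{y}$ and compute the square directly from the outer-product form, collapsing every internal product through the scalar contractions $\mathbf{y}^{T}\mathbf{x}=\mathbf{x}^{T}\mathbf{y}=c$, $\mathbf{x}^{T}\mathbf{x}=a$, $\mathbf{y}^{T}\mathbf{y}=b$. This yields the symmetric matrix $A^{2}=c(\mathbf{x}\mathbf{y}^{T}+\mathbf{y}\mathbf{x}^{T})-b\,\mathbf{x}\mathbf{x}^{T}-a\,\mathbf{y}\mathbf{y}^{T}$, expressed in the four rank-one blocks $\mathbf{x}\mathbf{y}^{T},\mathbf{y}\mathbf{x}^{T},\mathbf{x}\mathbf{x}^{T},\mathbf{y}\mathbf{y}^{T}$.

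Next I would form $A^{3}=A^{2}A$ and expand, once more reducing each product via the same three contractions. The key observation is that the coefficients of the symmetric blocks $\mathbf{x}\mathbf{x}^{T}$ and $\mathbf{y}\mathbf{y}^{T}$ cancel identically, whereas the coefficients of $\mathbf{x}\mathbf{y}^{T}$ and $\mathbf{y}\mathbf{x}^{T}$ both collapse to $c^{2}-ab$ (with opposite signs). Hence $A^{3}=(c^{2}-ab)(\mathbf{x}\mathbf{y}^{T}-\mathbf{y}\mathbf{x}^{T})=(c^{2}-ab)A$, which is the claimed identity after rewriting $c^{2}-ab=(\mathbf{x}\cdot\mathbf{y})^{2}-(\mathbf{x}\cdot\mathbf{x})(\mathbf{y}\cdot\mathbf{y})$.

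This is routine bookkeeping rather than a real obstacle; the only place where care is needed is tracking the eight terms in the $A^{2}A$ expansion and verifying the clean cancellation of the symmetric blocks. As a conceptual cross-check that sidesteps the bookkeeping, one can instead argue spectrally: a real antisymmetric matrix of rank at most two has nonzero eigenvalues $\pm i\lambda$, and a direct trace computation gives $\mathrm{tr}(A^{2})=2(c^{2}-ab)$, so that $\lambda^{2}=-\tfrac12\mathrm{tr}(A^{2})=ab-c^{2}$. Its minimal polynomial then divides $t(t^{2}+\lambda^{2})$, whence $A(A^{2}+\lambda^{2}E)=0$, i.e. $A^{3}=-\lambda^{2}A=(c^{2}-ab)A$, in agreement with the elementary computation.
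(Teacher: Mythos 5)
Your proof is correct. The paper itself gives no details here --- it only remarks that the proof is ``similar to Lemma~\ref{lemma:cross_product}'', i.e.\ a direct entrywise computation of $\bigl((\mathbf{x}\times\mathbf{y})^3\bigr)_{ij}$ with explicit index sums. Your first argument is that same computation repackaged more cleanly: writing $A=\mathbf{x}\mathbf{y}^{T}-\mathbf{y}\mathbf{x}^{T}$ and contracting through $a,b,c$ organizes the bookkeeping into four rank-one blocks, and your expansion of $A^{2}$ and $A^{3}$ checks out (the $\mathbf{x}\mathbf{x}^{T}$ and $\mathbf{y}\mathbf{y}^{T}$ coefficients do cancel, leaving $(c^{2}-ab)A$). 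The spectral cross-check is a genuinely different route from anything in the paper: it buys a conceptual explanation of \emph{why} the identity must hold --- an antisymmetric matrix of rank at most two is annihilated by $t(t^{2}+\lambda^{2})$, and $\lambda^{2}$ is pinned down by $\operatorname{tr}(A^{2})=2(c^{2}-ab)$ --- at the cost of invoking normality/diagonalizability over $\mathbb{C}$ rather than pure algebra. Either version suffices; the outer-product computation is the one closest in spirit to what the authors intended.
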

The proof is similar to Lemma \ref{lemma:cross_product}.

We have proved the semi-positive definite property of $D$ by considering $\mathbf{q}^TD\mathbf{q}$ and completing the square. From equation \eqref{eq:d} and \eqref{eq:q}, we observe that points with $\dot{\psi}=\nabla\psi\cdot\dot{\mathbf{q}}=\nabla\psi\cdot\mathbf{f}=0$ may cause singularity. It can be proved that
\begin{theorem}[]
\label{theorem_limitset}
The union of all the $\omega$-limit sets \cite{hirsch1974differential} for solutions starting from all points $\mathbf{q}$ in the phase space is denoted by
$S=\bigcup_{\mathbf{q}} \omega(\mathbf{q})$,
then $\dot{\psi}(\mathbf{q})=0$ for all $\mathbf{q}\in S$.
\end{theorem}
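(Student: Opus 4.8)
The plan is to recognize Theorem~\ref{theorem_limitset} as an instance of LaSalle's invariance principle, whose engine is precisely condition~($b$) of Definition~\ref{def:cand_Lyapunov}: since $\dot\psi(\mathbf{q})\leqslant 0$ everywhere, the value of $\psi$ is non-increasing along every orbit. Concretely, if $\phi_t$ denotes the flow of $\dot{\mathbf{q}}=\mathbf{f}(\mathbf{q})$, then $\frac{d}{dt}\psi(\phi_t(\mathbf{q}))=\dot\psi(\phi_t(\mathbf{q}))\leqslant 0$, so $t\mapsto\psi(\phi_t(\mathbf{q}))$ decreases monotonically. This monotonicity is the single structural fact I would exploit throughout.

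First I would fix an arbitrary starting point $\mathbf{q}$ and assume its forward orbit is bounded---an assumption I would make explicit, since otherwise $\omega(\mathbf{q})$ may be empty and the claim vacuous. Boundedness forces the orbit to lie in a compact set on which the continuous function $\psi$ is bounded below; combined with monotonicity, $\psi(\phi_t(\mathbf{q}))$ then converges to a finite limit $c$ as $t\to\infty$. Next, for any $\mathbf{p}\in\omega(\mathbf{q})$ I would pick a sequence $t_n\to\infty$ with $\phi_{t_n}(\mathbf{q})\to\mathbf{p}$ and use continuity of $\psi$ to obtain $\psi(\mathbf{p})=\lim_n\psi(\phi_{t_n}(\mathbf{q}))=c$. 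Hence $\psi\equiv c$ is constant on the whole of $\omega(\mathbf{q})$.

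The decisive step is the standard fact that $\omega(\mathbf{q})$ is invariant under the flow: every orbit that starts in $\omega(\mathbf{q})$ remains in it for all time. Granting this, I would take any $\mathbf{p}\in\omega(\mathbf{q})$ and note that the entire trajectory $\phi_t(\mathbf{p})$ stays inside $\omega(\mathbf{q})$, where $\psi$ takes the constant value $c$. Differentiating $\psi(\phi_t(\mathbf{p}))\equiv c$ at $t=0$ yields $\dot\psi(\mathbf{p})=\nabla\psi(\mathbf{p})\cdot\mathbf{f}(\mathbf{p})=0$. Since $\mathbf{p}\in\omega(\mathbf{q})$ and $\mathbf{q}$ were arbitrary, $\dot\psi$ vanishes on $\bigcup_{\mathbf{q}}\omega(\mathbf{q})=S$, which is the assertion.

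I expect the main obstacle to be not the algebra but the topological bookkeeping: ensuring $\omega(\mathbf{q})$ is nonempty and invariant. Invariance rests on completeness of the flow together with a continuous-dependence argument (if $\phi_{t_n}(\mathbf{q})\to\mathbf{p}$ then $\phi_{t_n+s}(\mathbf{q})\to\phi_s(\mathbf{p})$ for each fixed $s$), which I would either cite from the dynamical-systems literature or reprove in a line. An alternative route that sidesteps invariance is a direct contradiction: if $\dot\psi(\mathbf{p})<0$ for some $\mathbf{p}\in\omega(\mathbf{q})$, continuity gives a neighborhood of $\mathbf{p}$ on which $\psi$ strictly decreases at a rate bounded away from zero; since the orbit returns to this neighborhood infinitely often, $\psi$ would drop by a fixed amount infinitely many times and diverge to $-\infty$, contradicting convergence to $c$. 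The care there lies in controlling the dwell time of each passage, so I would prefer the invariance argument as the cleaner presentation.
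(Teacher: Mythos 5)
Your proposal is correct, and your main route is genuinely different from the paper's. The paper proves the statement by contradiction: assuming $\dot\psi(\mathbf{q}_0)<0$ at a limit point $\mathbf{q}_0$, it takes a neighborhood $\mathcal{O}$ where $\dot\psi<c<0$, shows $\psi(\mathbf{q}(t_i))$ decreases strictly to $\psi(\mathbf{q}_0)$ from above along the returning sequence, and then uses continuous dependence on initial conditions to show that a solution starting near $\mathbf{q}_0$ must drop below $\psi(\mathbf{q}_0)$ after a short time $s$, contradicting the monotone approach from above --- this is precisely the ``alternative route'' you sketch in your last paragraph, including the dwell-time subtlety you flag (the paper's bound $\int_{L\cap\mathcal{O}}\dot\psi\,dt\leqslant c\|\Delta L\|$ implicitly requires the passage time through $\mathcal{O}$ to be bounded below, which is exactly the point you say needs care). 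Your preferred route instead packages the work into the standard invariance of $\omega(\mathbf{q})$: $\psi$ is constant on $\omega(\mathbf{q})$ by monotonicity plus continuity, invariance keeps the orbit through $\mathbf{p}\in\omega(\mathbf{q})$ inside the level set, and differentiation gives $\dot\psi(\mathbf{p})=0$. This is cleaner and delegates the continuous-dependence argument to a citable textbook fact, at the cost of importing that fact; the paper's version is self-contained. Two small remarks: your explicit boundedness hypothesis is unnecessary --- if $\omega(\mathbf{q})$ is nonempty then the monotone limit $c$ of $\psi$ along the orbit automatically equals the finite value $\psi(\mathbf{p})$ for any $\mathbf{p}\in\omega(\mathbf{q})$, and if $\omega(\mathbf{q})$ is empty the claim is vacuous, as you note; and the invariance of $\omega(\mathbf{q})$ does tacitly require that solutions through its points be defined (forward) for the differentiation step, which holds here since $\mathbf{f}$ is smooth. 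Neither issue affects the validity of your argument.
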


\begin{proof}
We apply proof by contradiction.
If for a given point $\mathbf{q}_0$, $\dot{\psi}(\mathbf{q}_0)<0$. Since $\psi$ and $\dot{\mathbf{q}}=\mathbf{f}(\mathbf{q})$ are continuous differentiable functions then $\dot{\psi}=\nabla\psi\cdot\dot{\mathbf{q}}$ is continuous. So there exists an open set $\mathcal{O}$ as such for any $\mathbf{q}\in\mathcal{O}$, $\dot{\psi}(\mathbf{q})<c<0$ where c is a negative constant.

Suppose $\mathbf{q}_0\in S$, then there is a solution $\mathbf{q}(t)$ and a sequence $t_i$, $i\in\mathbb{N}$, $\mathbf{q}(t_i)\in \mathcal{O}$ and $\lim_{i\to \infty}\mathbf{q}(t_i)=\mathbf{q}_0$.
Since $\psi\in C^1$ and satisfies $\dot{\psi}\leqslant0$, the trajectory $L$ from $\mathbf{q}(t_i)$ to $\mathbf{q}(t_{i+1})$ must have a segment $\Delta L\subseteq L\cap\mathcal{O}$ inside $\mathcal{O}$, we have
\begin{align*}
\psi\left(\mathbf{q}(t_{i+1})\right)-\psi\left(\mathbf{q}(t_i)\right)&= \int_L\dot{\psi}dt=\int_{L\setminus\mathcal{O}}\dot{\psi}dt+\int_{L\cap\mathcal{O}}\dot{\psi}dt\\
&\leqslant c||\Delta L||<0\,.
\end{align*}
Hence $\{\psi({\mathbf{q}(t_n))}\}$ is a strictly monotonically decreasing sequence. By the continuity of $\psi$, $\lim_{i\to \infty}\psi(\mathbf{q}(t_i))=\psi(\mathbf{q}_0)$,
thus
\begin{align}
\label{eq:contradict}
\psi(\mathbf{q}(t_i))>\psi(\mathbf{q}_0), i\in\mathbb{N}\,.
\end{align}

Let $\mathbf{q}_0(t)$ be the solution starting at $\mathbf{q}_0$. For any $s>0$, we have $\psi(\mathbf{q}_0(s))<\psi(\mathbf{q}_0)$. Since $\mathbf{f}\in C^1$, then the solutions are continuously dependent on initial conditions \cite{hirsch1974differential}. Therefore, for any solution $\mathbf{q}'(t)$ starting sufficiently near $\mathbf{q}_0$, $\exists \varepsilon>0$
\begin{align*}
\psi(\mathbf{q}'(s))-\frac{\varepsilon}{2}<\psi(\mathbf{q}_0(s))<\psi(\mathbf{q}_0)-\varepsilon\Rightarrow \psi(\mathbf{q}'(s))<\psi(\mathbf{q}_0)\,.
\end{align*}
As a result, there exists adequately large $n$
\begin{align}
\psi(\mathbf{q}(t_n+s))<\psi(\mathbf{q}_0)\,,
\end{align}
which conflicts with \eqref{eq:contradict}. Hence $\mathbf{q}_0\not\in S$.
\end{proof}
Discussions on singularity for fixed points and limit cycles are presented below. Such singularity is canceled sometimes by the numerator in \eqref{eq:d} and \eqref{eq:q} but unavoidable for other cases where it exists, reflecting the nature of the dynamics.
\begin{itemize}
\item
All fixed points $\mathbf{q}^*\in S$, for $\mathbf{q}(t)=\mathbf{q}^*$ is a trivial solution, thus leading to singularity of $D$. Since $\nabla \psi(\mathbf{q}^*)=0$ and $\dot{\mathbf{q}}(\mathbf{q}^*)=0$, this type of singularity will not impact the potential function $\psi$.
\item
Points on limit cycles belong to $S$ where $\dot{\psi}=0$. Moreover, $\psi$ has to be equal-potential everywhere on a limit cycle fulfilling (\ref{item_g:b}) of the Definition \ref{def:cand_Lyapunov}. Together with the dissipation nature, we obtain $\nabla\psi=0$ \cite{zhu2006limit}. This demonstrates that on a limit cycle, the system is no longer dissipative but conserved.
\end{itemize}

Intuitively, a global Lyapunov function for a deterministic dynamical system will satisfy $\dot{\psi}\leqslant0$, which indicates all possible places in the phase space when $t\to \infty$ as the evolution converges.
Therefore, such a function with $\dot{\psi}<0$ except an unavoidable set of singular points (e.g.~points in $S=\bigcup_{\mathbf{q}} \omega(\mathbf{q})$) is usually what we seek, providing
 a strongest prediction on the evolution result.

Note that the construction is not unique. As the formerly presented construction starts from \eqref{eq:CFE}, we can symmetrically provide another one from \eqref{eq:SFE}. Since $[D+Q]\nabla\psi=-\dot{\mathbf{q}}$, then based on Lemma 1, $D=-\frac{\mathbf{f}\cdot\nabla\psi}{\nabla\psi\cdot\nabla\psi}E$, $Q=-\frac{\mathbf{f}\times\nabla\psi}{\nabla\psi\cdot\nabla\psi}$, $S=-\left[\frac{\nabla\psi\cdot\nabla\psi}{\mathbf{f}\cdot \nabla \psi}
E+\frac{\left(\mathbf{f}\times \nabla \psi\right)^2}{\left(\mathbf{f}\cdot \nabla \psi \right)\left(\mathbf{f}\cdot\mathbf{f}\right)}\right]$ and
$T=\frac{\mathbf{f}\times \nabla \psi}{\mathbf{f}\cdot\mathbf{f}}$. The explicit constructions presented in this letter have not been discovered from similar frameworks such as \cite{wang2003generalized}. These different constructions will not affect our equivalence result.

\begin{example}[Hamiltonian System]
For Hamiltonian systems \cite{arnold1989mathematical},
\begin{align*}
\dot{\mathbf{q}}
=\left(\frac{\partial H}{\partial p_1},...,\frac{\partial H}{\partial p_n},-\frac{\partial H}{\partial q_1},...,-\frac{\partial H}{\partial q_n}\right)^T
=-J\nabla H\,,
\end{align*}
where $J=\left(\begin{array}{cc}0 &-E\\E &0\end{array}\right)$. Thus it is apparent that \eqref{eq:SFE} is satisfied with $D=0$ and $\psi=H$, for $J$ is antisymmetric. Then by letting $T=-J$ and $S=0$ for energy conservation,
\begin{align*}
\left[S+T\right] \dot{\mathbf{q}}=T\dot{\mathbf{q}}=-J\dot{\mathbf{q}}&=-\nabla H=-\nabla \psi \Rightarrow \psi=H\,.
\end{align*}
\end{example}
The global Lyapunov function is actually the Hamiltonian. Therefore, the generalization of the vector-valued cross product in \eqref{eq:CFE} is appropriate. Besides, all systems discussed in \cite{PhysRevLett.81.2399} can be decomposed uniformly by this approach.

\begin{example}[Saddle Point System]
The system
\begin{equation*}
   \left\{
     \begin{array}{l}
        \dot{x}= x\\
        \dot{y}= -y
      \end{array}
   \right.
\end{equation*}
can be rewritten as
\begin{align*}
\left[S+T\right]\dot{\mathbf{q}}&=
\left(\begin{array}{cc}
1 & 1 \\
-1 & 1
\end{array}\right)\left(\begin{array}{c}
 x\\
 -y
 \end{array}\right)
=-\left(\begin{array}{c}
 -x+y\\
 x+y
 \end{array}\right) =-\nabla\psi\,,
 \end{align*}
 with $\psi(x,y)=-\frac{1}{2}x^2+xy+\frac{1}{2}y^2$.
One can check that
\begin{align*}
-\left[D+Q\right]\nabla \psi=-\frac{1}{2}\left(\begin{array}{cc}
1 & -1 \\
1 & 1
\end{array}\right)\left(\begin{array}{c}
 -x+y\\
 x+y
 \end{array}\right)
 =\dot{\mathbf{q}}\,
\end{align*}
where $D=\frac{1}{2}\left(\begin{array}{cc}
1 & 0 \\
0 & 1
\end{array}\right)$ and $Q=\frac{1}{2}\left(\begin{array}{cc}
0 & -1 \\
1 & 0
\end{array}\right)$.
Thus $\psi(x,y)$ is a global Lyapunov function for the system.
\end{example}

The \textit{Lyapunov equation} points another issue of our framework --- the global Lyapunov function is usually non-unique for deterministic systems \cite{PhysRevLett.81.2399}; but a unique and quantitative measure once specifying the diffusion matrix $D$ \cite{ao2004potential}.
For linear systems $\dot{\mathbf{q}}=A\mathbf{q}$, if any two eigenvalues $\lambda_i$ and $\lambda_j$ of $A$ satisfy $\lambda_i+\lambda_j\neq 0$ then for any symmetric and positive definite matrix $R$, there is a unique symmetric and invertible matrix $P$ fulfilling the Lyapunov equation \cite{haddad2008nonlinear}
\begin{align}
\label{eq:lya_eq}
A^T P+PA+R=0\,.
\end{align}
The system then has a global Lyapunov function $L=\mathbf{q}^TP\mathbf{q}$.
If $P$ is also positive definite, $L$ will be reduced to a classical global strong Lyapunov function.
Under such a configuration, we observe $[S+T]=-2PA^{-1}$, $S=-PA^{-1}-\left(A^T\right)^{-1}P$, since $A$ has no zero eigenvalue.
By setting $D=\frac{1}{4}P^{-1}RP^{-1}$, it is straightforward to prove that there is a one-to-one correspondence between symmetric and positive definite matrices $R$ and $D$.
Hence, \eqref{eq:lya_eq} is in fact the generalized Einstein relation \eqref{GER} for these linear systems. This example clearly demonstrates that
there are usually more than one global Lyapunov function for a system with different $D$. For a certain $D$, \eqref{GER} will guarantee the uniqueness of the global Lyapunov function.

Ordinary differential equations for a deterministic dissipative system indicate only the property of dissipation along the trajectories without specifying the speed of dissipation. Thus, arbitrary speed is acceptable, leading to different global Lyapunov functions. Therefore, the global Lyapunov function obtained is merely a qualitative measure (a partial order, e.g.~the function value of different stable fixed points are not comparable) but not quantitative. By indicating the diffusion matrix $D$ which is in fact the microscopic description of dissipation, the details of the frictional force are provided. Hence, the speed of dissipation along the trajectories is provided, defining the global Lyapunov function as a unique quantitative measure. This measure can be expressed as a Boltzmann-Gibbs distribution \cite{ao2008emerging} on the final steady state (if steady state exists) of the system's evolution.

In conclusion, we have presented a global Lyapunov function that coexists with complex dynamical behaviors as the natural generalization of the classical Lyapunov function. We have provided a constructive proof on the equivalence of the global Lyapunov function and the potential function obtained through a physical treatment of general dynamics. This relationship suggests new approaches on the construction of Lyapunov functions. Finally, we point out that for linear systems, the Lyapunov equation is a reduced form of the generalized Einstein relation.

\begin{acknowledgments}
This work was supported in part by the China 985 Initiative via Shanghai Jiao Tong University (R.Y., Y.M., B.Y. and P.A.); by the National 973 Projects No.~2007CB914700 and No.~2010CB529200 (P.A.); and by the Chinese Natural Science Foundation No.~NFSC61073087 (R.Y., Y.M. and B.Y.).
\end{acknowledgments}

\end{document}